\documentclass[prl,twocolumn,superscriptaddress]{revtex4-1}

\usepackage{graphics}
\usepackage{graphicx}
\usepackage{amsthm} 
\usepackage{amsbsy} 
\usepackage{amsmath,amsfonts,amssymb,times,color}
\usepackage{enumerate}
\usepackage{hyperref}
\usepackage{natbib}
\usepackage{braket}

\newtheorem{thm}{Theorem}
\newtheorem{cor}[thm]{Corollary}

\theoremstyle{remark}

\theoremstyle{definition}
\newtheorem{dfn}[thm]{Definition}

\newcommand{\be}{\begin{equation}}
\newcommand{\ee}{\end{equation}}
\newcommand{\ba}{\begin{eqnarray}}
\newcommand{\ea}{\end{eqnarray}}
\newcommand{\ban}{\begin{eqnarray*}}
\newcommand{\ean}{\end{eqnarray*}}
\newcommand{\bit}{\begin{itemize}}
\newcommand{\eit}{\end{itemize}}
\newcommand{\ben}{\begin{enumerate}}
\newcommand{\een}{\end{enumerate}}


\newcommand\abs[1]{\left|#1\right|}
\newcommand{\born}[2]{\abs{\braket{#1|#2}}^2}

\def\ketbra[#1]#2{\mathinner{\vert{#1}\rangle\langle{#2}}\vert}


\newcommand{\de}[1]{\,\mathrm{d}#1}



\definecolor{red}{rgb}{0.9,0,0}
\definecolor{green}{rgb}{0,0.8,0}
\definecolor{blue}{rgb}{0,0,0.8}
\definecolor{cautionred}{rgb}{1.0,0,0}

\definecolor{maroon}{rgb}{0.7,0,0}

\definecolor{ngreen}{rgb}{0.3,0.7,0.3}

\definecolor{golden}{rgb}{0.8,0.6,0.1}

\begin{document}
\title{No $\psi$-epistemic model can fully explain the indistinguishability of quantum states}
\author{Jonathan Barrett}
\affiliation{Department of Computer Science, University of Oxford, UK}
\author{Eric G. Cavalcanti}
\affiliation{School of Physics, The University of Sydney, Australia}
\affiliation{Department of Computer Science, University of Oxford, UK}
\author{Raymond Lal}
\affiliation{Department of Computer Science, University of Oxford, UK}
\author{Owen J. E. Maroney}
\affiliation{Faculty of Philosophy, University of Oxford, UK}
\pacs{03.65.Ta, 03.67.-a, 03.65.-w}
\date{\today}

\begin{abstract}

According to a recent no-go theorem (M.~Pusey, J.~Barrett and T.~Rudolph, Nature Physics {\bf 8} 475 (2012)), models in which quantum states correspond to probability distributions over the values of some underlying physical variables must have the following feature: the distributions corresponding to distinct quantum states do not overlap. This is significant because if the distributions do not overlap, then the quantum state itself is encoded by the physical variables. In such a model, it cannot coherently be maintained that the quantum state merely encodes information about underlying physical variables. The theorem, however, considers only models in which the physical variables corresponding to independently prepared systems are independent. This work considers models that are defined for a single quantum system of dimension $d$, such that the independence condition does not arise. We prove a result in a similar spirit to the original no-go theorem, in the form of an upper bound on the extent to which the probability distributions can overlap, consistently with reproducing quantum predictions. In particular, models in which the quantum overlap between pure states is equal to the classical overlap between the corresponding probability distributions cannot reproduce the quantum predictions in any dimension $d \geq 3$. The result is noise tolerant, and an experiment is motivated to distinguish the class of models ruled out from quantum theory.
\end{abstract}
\maketitle


No-go theorems such as Bell's \cite{Bell1964} are of central importance to our understanding of quantum mechanics. Bell's theorem shows that locally causal models must make different predictions from quantum theory. In addition to the fundamental significance of this result, Bell's theorem has applications in quantum information processing, most notably in device-independent cryptography and randomness generation \cite{Ekert1991, Barrett2005, Cavalcanti2012, Pironio2010}. 

Recently, a number of new no-go results have been derived, addressing a different question than whether nature can be described by a locally causal theory.  The question concerns whether the quantum state should be viewed as a description of the physical state of a system, or as an observer's information about the system. Many authors (see, e.g., Refs.~\cite{Spekkens2005, Bartlett2011, Fuchs2013}, and references therein) have argued for the latter, pointing out, for example, that quantum collapse is analogous to Bayesian updating of a classical probability distribution when new data is obtained, or that the indistinguishability of non-orthogonal quantum states is analogous to the indistinguishability of overlapping probability distributions. Ref.~\cite{Pusey2012}, following Ref.~\cite{Harrigan2010}, considers models of a specific form, in which the quantum state corresponds to a probability distribution over some set of underlying physical states, hence can be thought of as representing an observer's partial information about the physical state. It is shown that such models cannot recover the quantum predictions unless the distributions are disjoint for distinct quantum states. Roughly speaking, if the assumptions of Ref.~\cite{Pusey2012} are accepted, then the quantum state must describe some part of reality.

One assumption of Ref.~\cite{Pusey2012} is that the physical states are uncorrelated for independently prepared systems. It is interesting to investigate what can be established without this assumption. Various works have investigated what can be concluded by considering measurements on a single system only, i.e, without any assumption about independent systems \cite{Lewis2012,Maroney2012a,Leifer2013, Patra2013, Aaronson2013}. Here, we consider a single quantum system, and derive bounds on the extent to which the probability distributions corresponding to distinct quantum states can overlap. We show that what we call \emph{maximally $\psi$-epistemic models}, in which the overlap of the probability distributions is large enough to explain fully the indistinguishability of quantum states, must make different predictions from quantum theory for Hilbert space dimension $d\geq 3$. Our result is noise-tolerant, allowing for experimental tests to rule out this class of models. Furthermore, we show that as $d\rightarrow\infty$, any model recovering quantum predictions must become \emph{arbitrarily bad} at explaining quantum state indistinguishability.

\textit{Non-orthogonality and epistemic states.}
Non-orthogonal quantum states cannot be distinguished with certainty in a single shot. This is sometimes regarded as a distinctly quantum phenomenon, but of course a similar thing is true of classical probability distributions. Consider a standard deck of 52 playing cards and a shuffling/drawing machine with two settings: with the first setting, a red card is drawn at random, and with the second setting, the card is a randomly chosen ace. The two settings correspond to probability distributions $p$ and $q$ such that $p=\frac{1}{26}$ for all red cards and $q=\frac{1}{4}$ for each ace. Given a single card drawn from the pack, and asked to determine under what setting the machine was operating, one cannot succeed with certainty. The reason is simply that the distributions $p$ and $q$ overlap, e.g., $p$ and $q$ are both nonzero for the ace of hearts.

This suggests that the inability to distinguish non-orthogonal quantum states could be explained analogously. In that case, two quantum states would be indistinguishable in a single-shot experiment because they would correspond to overlapping distributions over states of reality. The aim of this work is to explore the extent to which such an explanation is even possible, consistently with the quantum predictions.

\textit{Ontological models for quantum theory.} To formalize this idea, we shall use the framework  of \em ontological models \em \cite{Spekkens2005, Harrigan2010}. This framework assumes that when a physical system has been prepared in the quantum state $\ket{\psi}$, it is actually in an \em ontic state \em $\lambda$, which we can think of as the `state of reality'. An ontological model assigns to each quantum state $\ket{\psi}$ an \em epistemic state \em  $\mu_\psi$, which is a probability distribution over the set of ontic states $\Lambda$, and represents our ignorance about which ontic state $\lambda$ the system is in. Since an epistemic state is a probability distribution, it must satisfy
\begin{eqnarray} \label{eq:mu}
\mu_\psi(\lambda) \geq 0 & \quad\textrm{   and   }\quad & \int \mu_\psi(\lambda) \de{\lambda} =1.
\end{eqnarray}
The framework assumes that when a measurement is performed, the probability for a given outcome depends only on the ontic state $\lambda$. Hence for a measurement $M$, and outcome $f$, an ontological model assigns a \em response function\em, which yields the probability $\xi_M(f | \lambda)$ of obtaining the outcome $f$ in the state $\lambda$, and we have:
\begin{eqnarray}\label{eq:xi}
\xi_M(f | \lambda) \geq 0 & \quad\textrm{   and   }\quad & \sum_f \xi_M (f | \lambda) =1.
\end{eqnarray}
To reproduce the predictions of quantum theory, response functions must satisfy
\be\label{eq:bornrule}
\int_\Lambda \xi_M(f|\lambda)\mu_\psi(\lambda) \,\mathrm{d}\lambda=\abs{\braket{f|\psi}}^2
\ee
for all $\ket{\psi}$ and $f$.

Standard distance measures, defined on probability distributions and quantum states, will be useful in the following. For distributions $p(x)$ and $q(x)$, the \emph{classical trace distance} is
\be
\delta_C(p,q):=\frac{1}{2}\int \abs{p(x)-q(x)}\,\mathrm{d}x \,.
\ee
This quantity has an operational interpretation. Suppose that the distributions $p(x)$ and $q(x)$ are associated with two different preparations of the variable $x$ (as with the cards above), and suppose that equal a priori probabilities are assigned to the two preparations.  The probability of correctly guessing the preparation, given a single sample of $x$ is $1/2(1+\delta_C(p,q))$. 

In the quantum case, the \emph{quantum trace distance}, for pure states, is given by
\be\label{eq:quantum_trace}
\delta_Q(\psi,\phi)=\sqrt{1-\abs{\braket{\psi|\phi}}^2} \,.
\ee
If one of a pair of quantum states $|\psi\rangle$ or $|\phi\rangle$ is prepared with equal probability, then, by using an optimal measurement, the probability of correctly identifying which state has been prepared is $1/2(1+\delta_Q(\psi,\phi))$. 

Define the {\it classical overlap} of two distributions $p$ and $q$ as
\be
\omega_C(p,q) := 1 - \delta_C(p,q) \, = \int \min \{p(x),q(x)\} \,\mathrm{d}x.
\ee
Similarly, for quantum states $|\psi\rangle$ and $|\phi\rangle$, let the \emph{quantum overlap} be given by
\be
\omega_Q(\psi,\phi) := 1 - \delta_Q(\psi,\phi).
\ee

Following Ref.~\cite{Harrigan2010}, 
\begin{dfn}
An ontological model is  \emph{$\psi$-epistemic} if there exists at least one pair of distinct quantum states, $\ket{\psi}$ and $\ket{\phi}$, such that the corresponding epistemic states $\mu_\psi$ and $\mu_\phi$ have nonzero overlap, i.e., $\omega_C(\mu_\psi,\mu_\phi) > 0$. If a model is not $\psi$-epistemic, then it is \emph{$\psi$-ontic}. \footnote{The terminology derives from the idea that if $\mu_\psi$ and $\mu_\phi$ do not overlap for any pair of distinct quantum states, then distinct quantum states always refer to distinct states of reality. In this case, the quantum state itself can be regarded as a part of reality (hence it is \emph{ontic}). In the context of ontological models, it must be the case that $\mu_\psi$ and $\mu_\phi$ typically overlap if it is to be maintained that the quantum state represents only information about reality (hence is \emph{epistemic}). (As it stands, the definition of $\psi$-epistemic is very weak because it is only required that the distributions overlap for one pair of quantum states. This serves to make results that rule out $\psi$-epistemic models stronger.)}
\end{dfn}

Hardy \cite{hardyprivcomm} raised the question of whether $\psi$-epistemic models could reproduce the predictions of quantum theory. Ref.~\cite{Pusey2012} then showed that under an assumption to do with the independence of separately prepared systems, they cannot. The assumption is that when two quantum systems are prepared independently, they can be assigned separate ontic states $\lambda_1$ and $\lambda_2$, and that the joint distribution satisfies $\mu_{\psi\otimes\phi}(\lambda_1,\lambda_2) = \mu_{\psi}(\lambda_1)\times \mu_{\phi}(\lambda_2)$. Various works since have explored the possibilities for ontological models for single systems, i.e., without this assumption. Ref.~\cite{Lewis2012} shows that $\psi$-epistemic models exist for quantum systems of arbitrary dimension. Ref.~\cite{Aaronson2013} goes further, demonstrating that for a quantum system of arbitrary dimension, a $\psi$-epistemic model exists with the additional property that $\omega_C(\mu_\psi,\mu_\phi) > 0$ for every pair of non-orthogonal states $|\psi\rangle$ and $|\phi\rangle$. Refs.~\cite{Aaronson2013, Hardy2012, Patra2013} show that $\psi$-epistemic models do not exist, given various additional assumptions. In Refs.~\cite{Maroney2012a,Leifer2013}, the question is raised whether $\psi$-epistemic models can reproduce quantum predictions given an assumption about the extent to which the epistemic states overlap.

Refs.~\cite{Maroney2012a,Leifer2013} are the most direct precursors to this work, since here we are also concerned with the extent to which the distributions $\mu_\psi$ and $\mu_\phi$ can overlap in models which recover the predictions of quantum theory. An advantage of the present work is that we use distance measures that are robust under small variations, hence our results are noise tolerant and subject to experimental test.

The following is an easy theorem, previously noted in Ref.~\cite{pbriontrap}.
\begin{thm} \label{thm:delta_ineq}
In any ontological model that recovers the predictions of quantum theory, 
\begin{equation} \label{eq:delta_ineq}
\omega_C(\mu_\psi,\mu_\phi) \leq \omega_Q(\psi, \phi)  \quad \forall \psi,\phi \, .
\end{equation}
\end{thm}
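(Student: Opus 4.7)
The plan is to exploit the operational interpretations of the two distance measures already stated in the excerpt: both $\tfrac{1}{2}(1+\delta_Q(\psi,\phi))$ and $\tfrac{1}{2}(1+\delta_C(\mu_\psi,\mu_\phi))$ are the optimal equal-prior success probabilities for distinguishing the two preparations---quantumly by an optimal measurement, and classically by an optimal decision rule taking a sample of $\lambda$ as input. The desired inequality $\omega_C(\mu_\psi,\mu_\phi)\leq\omega_Q(\psi,\phi)$ is equivalent to $\delta_Q(\psi,\phi)\leq\delta_C(\mu_\psi,\mu_\phi)$, i.e.\ classical discrimination from $\lambda$ is no harder than quantum discrimination. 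The key observation is that, in any ontological model reproducing quantum theory, a quantum measurement is realised as a particular randomised function of $\lambda$ via its response function, so the quantum discrimination strategy is \emph{one specific} classical strategy acting on $\lambda$ and hence can do no better than the classical optimum.

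Concretely, I would fix the optimal quantum discrimination measurement $M^*$ for $\ket{\psi}$ versus $\ket{\phi}$---the Helstrom two-outcome POVM with labels $\psi,\phi$---whose equal-prior success probability is $\tfrac{1}{2}(1+\delta_Q(\psi,\phi))$. By the hypothesis that the model reproduces quantum theory, $M^*$ has a response function $\xi_{M^*}(\cdot|\lambda)$ satisfying Eq.~\eqref{eq:bornrule}. The procedure ``on input $\lambda$, sample $f\in\{\psi,\phi\}$ from $\xi_{M^*}(\cdot|\lambda)$ and report the corresponding guess'' is then a valid randomised classical decision rule whose success probability against the equal-prior ensemble is exactly $\tfrac{1}{2}(1+\delta_Q(\psi,\phi))$ by construction. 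Since the Bayes-optimal classical rule achieves $\tfrac{1}{2}(1+\delta_C(\mu_\psi,\mu_\phi))$ and cannot be beaten, $\delta_Q(\psi,\phi)\leq\delta_C(\mu_\psi,\mu_\phi)$ follows, which rearranges to the claim.

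A more direct calculational route instead uses superadditivity of $\min$ under a non-negative weight $\xi_M(f|\lambda)$: for any measurement $M$ with outcome $f$,
\begin{equation*}
\min\{P(f|\psi),P(f|\phi)\}\,\geq\,\int \xi_M(f|\lambda)\min\{\mu_\psi(\lambda),\mu_\phi(\lambda)\}\,\de{\lambda},
\end{equation*}
where $P(f|s)=|\braket{f|s}|^2$ by Eq.~\eqref{eq:bornrule}. Summing over $f$ and using $\sum_f\xi_M(f|\lambda)=1$ yields $\sum_f\min\{|\braket{f|\psi}|^2,|\braket{f|\phi}|^2\}\geq\omega_C(\mu_\psi,\mu_\phi)$ for every $M$; choosing $M$ to be the Helstrom POVM reduces the left-hand side to $\omega_Q(\psi,\phi)$ after a short computation using $E_\psi+E_\phi=\mathbf{1}$. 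I do not anticipate any real obstacle: the operational characterisations are standard, and the only mild point is noting that the optimal discrimination POVM is representable in the model, which follows from the hypothesis that quantum predictions are recovered.
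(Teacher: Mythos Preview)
Your proposal is correct, and your primary argument---that the optimal quantum discrimination measurement, via its response function, is one particular classical decision rule on $\lambda$ and so cannot exceed the Bayes-optimal classical success probability $\tfrac{1}{2}(1+\delta_C)$---is exactly the paper's own proof, just stated with a bit more detail. Your alternative calculational route via $\sum_f\min\{P(f|\psi),P(f|\phi)\}\geq\omega_C$ is also sound and yields the same inequality, but the paper does not take that detour.
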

\textit{Proof.} Consider the optimal measurement for distinguishing two quantum states. Success occurs with probability $P_Q:=1-\omega_Q(\psi,\phi)/2$. Given the ontic state $\lambda$, the maximum probability to correctly guess which preparation was performed is given by $P_C:=1-\omega_C(\mu_\psi,\mu_\phi)/2$. But in an ontological model the output of the quantum measuring device depends only on the ontic state $\lambda$, thus $P_Q\leq P_C$ since $P_Q$ cannot be larger than what one would get by optimally using the information encoded in $\lambda$. $\Box$

\begin{dfn}
An ontological model is \emph{maximally $\psi$-epistemic} if and only if  for all pairs of states, $\omega_C(\mu_\psi,\mu_\phi) = \omega_Q(\psi,\phi)$. \footnote{The term \emph{maximally $\psi$-epistemic} was also used in Refs.~\cite{Maroney2012a} and \cite{Leifer2013}. In the latter, it has a slightly different definition.} 
\end{dfn}
The motivation for this terminology is that, as we have already argued, the impossibility of discriminating non-orthogonal quantum states would be explained in a natural way if the two quantum states sometimes correspond to the same state of reality. But this explanation would not be satisfying if the quantum and classical overlaps were not equal. For then, the two classical distributions could in principle be better discriminated by a device with access to $\lambda$, and some additional explanation must be adduced as to why the two quantum states are hard to distinguish. In a maximally $\psi$-epistemic model, on the other hand, the difficulty of discriminating non-orthogonal quantum states is completely and quantitatively explained by the difficulty of discriminating the corresponding epistemic states.

\textit{Ruling out maximally $\psi$-epistemic models.} Our results rule out maximally $\psi$-epistemic models for quantum systems of dimension $d \geq 3$, and are noise-tolerant. For $d=2$, an ontological model due to Kochen and Specker \cite{Kochen1967} can be shown to be maximally $\psi$-epistemic \cite{peterlewisprivcomm}. 

The case of three-dimensional systems, and an analysis designed to account for experimental noise, will follow below. First, consider systems of dimension $d\geq 4$. 
\begin{thm}\label{maintheorem}
Suppose that an ontological model reproduces the quantum predictions for a system of dimension $d\geq 4$, and that 
\[
\omega_C(\mu_\psi,\mu_\phi) \geq k \,\omega_Q(\psi,\phi) \quad \forall \psi,\phi ,
\]
for some constant $k$. Then $k < 4 / (d-1)$. If $d$ is power prime, then $k < 2/d$.
\end{thm}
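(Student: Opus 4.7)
The plan is a two-stage argument: first establish the stronger bound $k < 2/d$ for prime-power $d$ using mutually unbiased bases (MUBs), then lift this to $k < 4/(d-1)$ for arbitrary $d \geq 4$ via a subspace embedding.

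For prime-power $d$, consider the $d+1$ MUBs $\{|\psi_{b,i}\rangle\}$ with $|\langle\psi_{b,i}|\psi_{b',j}\rangle|^2 = 1/d$ whenever $b \neq b'$. For the basis measurement in MUB $b$, the Born rule forces the supports $\Lambda_{b,i} := \mathrm{supp}(\mu_{\psi_{b,i}})$ to be essentially pairwise disjoint and the response function to satisfy $\xi_b(i|\lambda) = 1$ a.e.\ on $\Lambda_{b,i}$. Hence for any state $\phi$,
\[
k\left(1 - \sqrt{1-|\langle\psi_{b,i}|\phi\rangle|^2}\right) \leq \omega_C(\mu_\phi, \mu_{\psi_{b,i}}) \leq \mu_\phi(\Lambda_{b,i}) \leq |\langle\psi_{b,i}|\phi\rangle|^2,
\]
the leftmost inequality coming from the overlap hypothesis, the middle from $\min(\mu_\phi,\mu_{\psi_{b,i}}) \leq \mu_\phi \cdot \mathbf{1}_{\Lambda_{b,i}}$, and the rightmost from $\mathbf{1}_{\Lambda_{b,i}} \leq \xi_b(i|\cdot)$. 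The heart of the proof is combining these two-sided bounds simultaneously across all $d+1$ bases to pin down $k$. Taking the probe $\phi = \psi_{0,1}$ and summing over $(b,i)$ with $b \neq 0$ using only the within-basis disjointness of the $\Lambda_{b,i}$'s yields only the weak inequality $k \leq 1 + \sqrt{1-1/d}$, which is bounded by $2$ and does not shrink with $d$. Extracting the sharp $2/d$ bound requires invoking the tight-frame identity $\sum_{b,i} |\psi_{b,i}\rangle\langle\psi_{b,i}| = (d+1)\,I$ --- for instance through the induced POVM $\{\tfrac{1}{d+1}|\psi_{b,i}\rangle\langle\psi_{b,i}|\}$, whose response-function normalization couples constraints across all bases at once, or through a symmetric many-outcome discrimination task over the entire $d(d+1)$-element MUB set.

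For general $d \geq 4$, pick a prime $p$ with $(d-1)/2 < p \leq d-1$, whose existence is guaranteed by Bertrand's postulate, and consider preparations supported in a $p$-dimensional subspace $V \subset \mathbb{C}^d$. Restricting the ontological model to states in $V$ gives an ontological model for $p$-dimensional quantum theory that inherits the overlap hypothesis with the same constant $k$; the prime-power bound then applies, giving $k < 2/p < 4/(d-1)$. The chief obstacle is the prime-power step: the pairwise counting argument leaves a factor-$d$ gap between $k \lesssim 2$ and $k < 2/d$, and closing it requires a genuinely multi-basis constraint that exploits the MUB algebraic structure rather than combining independent pairwise bounds. Once that step is in place, the Bertrand-based reduction for arbitrary $d$ is routine.
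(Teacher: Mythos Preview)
Your scaffolding is right and matches the paper: MUBs for prime-power $d$, the chain $k\,\omega_Q \le \omega_C(\mu_\phi,\mu_{\psi_{b,i}}) \le \mu_\phi(\Lambda_{b,i})$, within-basis disjointness of the supports $\Lambda_{b,i}$, and the Bertrand reduction for general $d$. You also correctly diagnose that within-basis disjointness alone gives only $k \le 1+\sqrt{1-1/d}$, a factor of $d$ short.

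The gap is in how you propose to close that factor. Neither the tight-frame identity nor the induced POVM $\{\tfrac{1}{d+1}\ket{\psi_{b,i}}\bra{\psi_{b,i}}\}$ supplies the needed constraint. The tight-frame identity controls $\sum_{b,i}\born{\psi_{b,i}}{\phi}=d+1$, but that only reproduces the per-basis bound $\sum_i \mu_\phi(\Lambda_{b,i})\le 1$ you already have. The POVM's response functions $\eta(b,i\mid\lambda)$ sum to $1$, but they bear no a priori relation to the indicators $\mathbf{1}_{\Lambda_{b,i}}$, which are tied to the \emph{projective} basis measurements; so ``response-function normalization couples constraints across all bases'' does not go through. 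Nothing in the $2$-design structure forces the preparation supports $\Lambda_{b,i}$ from different bases to be disjoint on $\Lambda_\phi$, and without that the $d^2$ lower bounds cannot be summed against a single normalization.

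What the paper actually uses is \emph{antidistinguishability} (PP-incompatibility in the sense of Caves--Fuchs--Schack): fixing one MUB vector $\ket{c}$ and letting $\{\ket{e^\gamma_i}\}$ run over the remaining $d$ bases, every triple $(\ket{c},\ket{e^\alpha_i},\ket{e^\beta_j})$ with $\alpha\ne\beta$ admits a three-outcome projective measurement in its span with $\braket{f_1|e^\alpha_i}=\braket{f_2|e^\beta_j}=\braket{f_3|c}=0$ (this is where $d\ge 4$ enters, via the CFS inequalities for $x_1=x_2=x_3=1/d$). Reproducing these zeros forces $\Lambda_{e^\alpha_i}\cap\Lambda_{e^\beta_j}\cap\Lambda_c$ to have measure zero. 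Hence, \emph{restricted to} $\Lambda_c$, the $d^2$ sets $\Lambda_{e^\gamma_i}$ are pairwise essentially disjoint across \emph{all} bases, so
\[
d^2\,k\left(1-\sqrt{1-\tfrac{1}{d}}\right)\;\le\;\sum_{\gamma,i}\int_{\Lambda_{e^\gamma_i}}\mu_c\;=\;\int_{\bigcup_{\gamma,i}\Lambda_{e^\gamma_i}}\mu_c\;\le\;1,
\]
giving $k\le\tfrac{1}{d}\bigl(1+\sqrt{1-1/d}\bigr)<2/d$. The missing idea in your proposal is precisely this three-state exclusion measurement; the multi-basis constraint comes from antidistinguishability of triples, not from any frame or POVM property of the full MUB set.
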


\begin{proof}  
Using terminology introduced by Caves, Fuchs and Schack \cite{Caves2002}, three pure states $\ket{a}$, $\ket{b}$ and $\ket{c}$, are \emph{PP-incompatible} if there exists an orthonormal basis $\{\ket{f_i}\}_{i=1}^3$ for the subspace spanned by $\ket{a}$, $\ket{b}$ and $\ket{c}$ such that $\braket{f_1|a} = 0$, $\braket{f_2|b} = 0$, and $\braket{f_3|c} = 0$. Ref.~\cite{Caves2002} shows the following. Let ${x_1:=\abs{\braket{a|b}}^2}$, ${x_2:=\abs{\braket{b|c}}^2}$ and ${x_3:=\abs{\braket{c|a}}^2}$. Then $\ket{a}$, $\ket{b}$ and $\ket{c}$ are PP-incompatible if and only if \footnote{There is a typographical error in Ref.~\cite{Caves2002}: the second inequality was there written as a strict inequality, but the non-strict inequality is correct.}
\ba\label{eq:PP_incompatible}
x_1 + x_2 + x_3 & < & 1\, \nonumber \\
(x_1 + x_2 + x_3 - 1)^2 & \geq & 4 \, x_1 x_2 x_3 \,.
\ea

Recall that a pair of bases $\{\ket{a_i}\}_i$ and $\{\ket{b_j}\}_j$ is \em mutually unbiased \em if $\abs{\braket{a_i | b_j}}^2 = 1/d$ for all $i,j$, where $d$ is the Hilbert-space dimension. If $d$ is power prime, then there exist $d+1$ mutually unbiased bases \cite{Klappenecker2004}. Let $\ket{c}$ be an element of one such basis, and for $i,\gamma\in\{1,\dots,d\}$, let the $d$ remaining bases be $\{\ket{e_i^\gamma}\}_i$, where $\gamma$ ranges over the distinct bases, and $i$ over the elements within a basis. For $\alpha\ne \beta$ and $d\geq 4$, the set $\{\ket{e^\alpha_i}, \ket{e^\beta_j}, \ket{c}\}$ is PP-incompatible by Eq.~\eqref{eq:PP_incompatible}.

Now, consider an ontological model for systems of dimension $d\geq 4$ with $d$ power prime. From the PP-incompatibility of $\{\ket{e^\alpha_i}, \ket{e^\beta_j}, \ket{c}\}$, it follows that there exists a measurement $M$ with outcomes $f_i$, $i=1,\ldots,4$  such that
\be\label{eq:PP_model1}
\int_\Lambda \xi_M(f_1 |\lambda)\mu_{e^\alpha_i}(\lambda) \,\mathrm{d}\lambda=\abs{\braket{f_1|e^\alpha_i }}^2 =  0,
\ee
\be\label{eq:PP_model2}
\int_\Lambda \xi_M(f_2 |\lambda)\mu_{e^\beta_j}(\lambda) \,\mathrm{d}\lambda=
\int_\Lambda \xi_M(f_3 |\lambda)\mu_{c}(\lambda) \,\mathrm{d}\lambda=  0.
\ee
and the outcome $f_4$ is a projector onto the orthogonal subspace and has zero probability on each of the three states.

Assume for contradiction that there is a subset $\Lambda^*\subseteq \Lambda$ of non-zero measure such that $\mu_{e^\alpha_i}(\lambda), \mu_{e^\beta_j}(\lambda), \mu_{c}(\lambda) > 0$ for all $\lambda\in\Lambda^*$. Eq.~\eqref{eq:PP_model1} and Eq.~\eqref{eq:PP_model2} then imply that for some $\lambda$, $\xi_M(f_1|\lambda)=\xi_M(f_2|\lambda)=\xi_M(f_3|\lambda)=0$. But this, along with the fact that $f_4$ has probability zero on all three states, contradicts Eq.~\eqref{eq:xi}. For quantum state $|\psi\rangle$, let $\Lambda_\psi$ denote the support of the distribution $\mu_\psi$. It follows that for any $\alpha \neq \beta$, and for any $i,j$, $\Lambda_{e^\alpha_i} \cap \Lambda_{e^\beta_j} \cap \Lambda_c$ is a set of measure zero.

Now, for any pair of distributions $\mu_\psi$ and $\mu_\phi$,
\be\label{eq:min_bound}
\int_{\Lambda_\phi}\mu_\psi(\lambda)\,\mathrm{d}\lambda \geq \omega_C(\mu_\psi, \mu_\phi) \,.
\ee
Assume that the ontological model satisfies $\omega_C(\mu_\psi,\mu_\phi) \geq k \,\omega_Q(\psi,\phi)$ for all pairs of states. Then for any $\gamma, i$,
\be\label{eq:inequality} 
\int_{\Lambda_{e^\gamma_i}}\mu_c(\lambda)\,\mathrm{d}\lambda \ge k\, \left(1-\sqrt{1-1/d}\right) \, .
\ee

For $i \neq j$ the vectors $\ket{e^\gamma_i}$ and $\ket{e^\gamma_j}$ are orthogonal, and can be distinguished by a single shot measurement. It follows that $\Lambda_{e^\gamma_i} \cap \Lambda_{e^\gamma_j}$ is a set of measure zero. Hence
\be
\int_{\bigcup\limits_i\Lambda_{e^\gamma_i}}\mu_c(\lambda)\,\mathrm{d}\lambda \ge  d\, k\, \left(1-\sqrt{1-1/d}\right)  \,.
\ee
Using the fact that $\Lambda_{e^\alpha_i} \cap \Lambda_{e^\beta_j} \cap \Lambda_c$ is a set of measure zero, 
\be 
\int_{
\bigcup\limits_\gamma \bigcup\limits_i \Lambda_{e^\gamma_i}
}
\mu_c(\lambda)\,\mathrm{d}\lambda  \ge  d^2\, k\, \left(1-\sqrt{1-1/d}\right)   \,.
\ee
This gives
\be
\label{eq:final_4d}
k \le \frac{1}{d}\left(1+\sqrt{1-1/d}\right) < \frac{2}{d} \,. 
\ee

The result for a system of arbitrary dimension $d \geq 4$ now follows immediately. Consider a $d'$-dimensional subspace, where $d' \leq d$ and $d'$ is power prime. The theorem applies to ontological models that recover the quantum predictions for preparations and measurements within this subspace. Hence any ontological model for the $d$-dimensional system must have $k < 2/d'$. Bertrand's Postulate states that for every natural number $n \geq 2$, there is a prime between $n$ and $2n$ \cite{Ramanujan1919}. Choosing $n = \lfloor d/2 \rfloor $ yields $k < 4 / (d-1)$.
\end{proof}
\begin{cor}\label{maincorollary}
No maximally epistemic ontological model can reproduce the quantum predictions for a system of dimension $d\geq 4$.
\end{cor}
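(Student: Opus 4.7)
The plan is to observe that \emph{maximally $\psi$-epistemic} is exactly the $k = 1$ case of the hypothesis of Theorem~\ref{maintheorem}, and then to read off the corollary from that theorem's conclusion. By definition, such a model satisfies $\omega_C(\mu_\psi,\mu_\phi) = \omega_Q(\psi,\phi)$ for every pair $\psi, \phi$, so a fortiori $\omega_C(\mu_\psi,\mu_\phi) \geq 1 \cdot \omega_Q(\psi,\phi)$; the hypothesis of Theorem~\ref{maintheorem} thus holds with $k = 1$.

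Next I would apply the conclusion of the theorem. For $d \geq 5$, the general bound $k < 4/(d-1)$ already forces $1 < 4/(d-1) \leq 1$, a contradiction. The only case requiring a moment's care is $d = 4$: here the general bound gives merely $k < 4/3$, which does not by itself exclude $k = 1$. However, $4 = 2^2$ is a prime power, so the sharper second clause of Theorem~\ref{maintheorem} applies, yielding $k < 2/d = 1/2$, and this too contradicts $k = 1$. Combining the two cases, no maximally $\psi$-epistemic model can exist in any dimension $d \geq 4$.

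There is essentially no obstacle to surmount, since the real work has already been done in Theorem~\ref{maintheorem}; the corollary is a one-line reduction. The only point that deserves explicit mention is the $d = 4$ edge case, where one must invoke the prime-power refinement of the bound rather than the generic $4/(d-1)$ form. Everything else is immediate.
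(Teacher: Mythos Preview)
Your proposal is correct and follows exactly the paper's approach: observe that a maximally $\psi$-epistemic model satisfies the hypothesis of Theorem~\ref{maintheorem} with $k=1$, then derive a contradiction from the theorem's bound. The paper compresses this to a single line (``we conclude that $k<1$''), whereas you spell out the $d=4$ edge case explicitly; your chain ``$1 < 4/(d-1) \leq 1$'' is a bit garbled in presentation but the intended logic is clear and sound.
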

\begin{proof} For a maximally epistemic ontological model, the antecedent of Theorem \ref{maintheorem} holds with $k=1$. But then we conclude that $k<1$, reaching a contradiction.
\end{proof}
Moreover, the upper bound on $k$ asymptotically tends to zero, meaning that as $d \rightarrow \infty$, every ontological model will assign a ratio between the classical and quantum overlaps tending to zero for at least some pairs of quantum states.

\textit{Note.} Examination of the proof shows that it is possible to state a stronger result (which we have left out of Theorem~\ref{maintheorem} for simplicity). Suppose that $k(\psi,\phi)$ is defined so that for each pair of states, $\omega_C(\mu_\psi,\mu_\phi) = k(\psi,\phi) \, \omega_Q(\psi,\phi)$. Then, a bound can be derived on the value of $k(\psi,\phi)$, averaged over the states used in the proof: 
\be
\sum_{\alpha, i} \frac{k(c,e^\alpha_i)}{d^2} < \frac{4}{d-1} \, ,
\ee
where $|c\rangle$, $|e^\alpha_i\rangle$ all lie within a power prime-dimensional subspace. Since $|c\rangle$ can be chosen to be an arbitrary state (by applying the same unitary to $|c\rangle$ and all the other states in the proof, thus maintaining their overlaps), this implies that for every quantum state in $d\geq4$, there exists a finite set of states such that the average ratio of classical and quantum overlaps is bounded as above. 

\textit{The noisy case.} In a real experiment, observed relative frequencies will not exactly match the quantum predictions, hence if the experiment is to rule out a class of ontological models, it is necessary to consider models that only approximately reproduce quantum predictions. Suppose that an experiment is carried out in which quantum systems are repeatedly prepared and then measured. Each time, the preparation is (intended to be) of a pure state chosen at random from the set of mutually unbiased bases employed in the proof of Theorem~\ref{maintheorem}. The measurement is (intended to be) either a projective measurement onto one of these bases, or a projective measurement $M$ with outcomes $f_1,\ldots, f_4$, chosen so that $\langle f_1 | e^\alpha_i\rangle = \langle f_2 | e^\beta_j\rangle = \langle f_3 | c \rangle = 0$ for some triple $(|e^\alpha_i\rangle , | e^\beta_j \rangle , |c\rangle )$, with $f_4$ corresponding to a projector onto the orthogonal subspace.

Let $R[g|\psi]$ be the relative frequency with which outcome $g$ is observed when the preparation is $\psi$. Quantum theory predicts, for example, that if $\langle f_1|e^\alpha_i\rangle=0$, and the experiment is carried out perfectly, then $R[f_1 | e^\alpha_i]$ will be zero, while noise will ensure that $R[f_1 | e^\alpha_i]$ is typically greater than zero. The following analysis is designed to take this noise into account. For simplicity, we assume that the measurement is perfectly aligned in the three-dimensional subspace spanned by $(|e^\alpha_i\rangle , | e^\beta_j \rangle , |c\rangle )$, ignoring the possibility that the outcome $f_4$ occurs. We also ignore the related issue of detector inefficiency. 

For each triple define the average
\be\label{eq:epsilonbound1}
\epsilon(c, e^\alpha_i,e^\beta_j) :=  \frac13 \left(R[f_1|e^\alpha_i] + R[f_2|e^\beta_j] + R[f_3|c] \right) \, .
\ee
For each pair of states, chosen from the same basis, $e^\alpha_i$ and $e^\alpha_j$ ($i\ne j$), consider a measurement onto that basis, and define the average
\ba\label{eq:epsilonbound2}
\epsilon(e^\alpha_i,e^\beta_j) :=  \frac12 \left(R[e^\alpha_j | e^\alpha_i] + R[e^\alpha_i | e^\beta_j] \right)  \, .
\ea

Now consider an ontological model that predicts probabilities that coincide with the observed data. This means that for each preparation $\psi$ and outcome $g$, the probability predicted by the model satisfies
\be\label{eq:cond_prob_def}
P(g|\psi) := \int_\Lambda \xi_M(g|\lambda) \mu_{\psi} (\lambda) \mathrm{d}\lambda = R[g|\psi]\, .
\ee
For simplicity, the following assumes that the dimension $d$ is power prime. It is shown in Appendix 1 that in this case,
\ba
\label{eq:initial_inequality_main}
k d^2 \left( 1 - \sqrt{1 - \frac1d}\right) \le && \nonumber \\ 
1+   3 \sum_{\substack{\alpha<\beta \\ i,j}} && \epsilon(c, e^\alpha_i,e^\beta_j) +  2 \sum_{\substack{\alpha \\ i < j }} \epsilon(e^\alpha_i,e^\alpha_j) \,.\;\;\;
\ea

If we average the noise terms over all possible choices of measurement used in the experiment, defining
\be
\epsilon_1 := \frac{\sum_{\substack{\alpha<\beta,i,j}} \epsilon(c, e^\alpha_i,e^\beta_j)}{d^3(d-1)/2}, \qquad \epsilon_2 := \frac{\sum_{\substack{\alpha,i<j}} \epsilon(e^\alpha_i,e^\alpha_j)}{d^2(d-1)/2},
\ee
then
\begin{multline}
k d^2\left(1-\sqrt{1-1/d}\right)  \le  1 + \frac32 d^3(d-1) \epsilon_1 \\
+ d^2(d-1)\epsilon_2\,.
\end{multline}
Hence 
\ba
k & \le & \frac{1}{d}\left(1+ d^2(d-1)\left(\frac32 d \epsilon_1
+\epsilon_2 \right)\right)\left(1+\sqrt{1-1/d}\right) \nonumber \\
 & < & \frac{2}{d}+d^2 \left(3d \epsilon_1+2\epsilon_2 \right)\,.
\ea

For any value of $d\ge 4$ there exist small but non-zero values of $\epsilon_1$ and $\epsilon_2$ for which the experimentally determined bound $k < 1$ can be achieved. The result is therefore robust against small amounts of experimental noise and does not admit a finite precision loophole. In particular, a value of $k < 1$ is possible if the noise is bounded by: 
\be
3d \epsilon_1 +2\epsilon_2 < \frac{2}{d-1}\left(1-\sqrt{1-1/d}-\frac{1}{d^2}\right) \, .
\ee
Assuming $\epsilon=\epsilon_1=\epsilon_2$, this requires an error of $\epsilon < 0.0034$ for $d=4$ and even lower for higher dimensions. A high-precision measurement is required to achieve this, but it is one that is within the reach of the current state of the art using, for example, ion trap \cite{Brown2011, Gaebler2012} or magnetic resonance \cite{Ryan2009} technology.

\textit{Ruling out maximally epistemic models for $d=3$.}
The proof of Theorem~\ref{maintheorem} does not apply to the $d=3$ case, since mutually unbiased bases supply PP-incompatible triples only if $d\ge4$. It is, nonetheless, possible to rule out maximally epistemic models. The analysis of the noisy case turns out to be useful, because in $d=3$ one can construct a proof that makes use of triples of quantum states that are close to, rather than exactly, PP-incompatible. One can then apply an inequality analogous to Eq.~(\ref{eq:initial_inequality_main}). The details of this argument are given in Appendix~2. We obtain $k \leq 0.95$.

\textit{Conclusion.} We have considered ontological models for quantum systems, wherein a quantum state corresponds to a probability distribution over some set of ontic states. Such a model might be viewed as a schematic account of an underlying theory, more fundamental than quantum theory, but might equally be thought of as a classical simulation of quantum theory. Either way, it is interesting to investigate the constraints on such models, given that they reproduce quantum predictions. From an analysis of preparations and measurements on a single system, we have derived an upper bound on the extent to which probability distributions corresponding to distinct quantum states can overlap. An experimental challenge is to perform an experiment with sufficient precision that maximally $\psi$-epistemic models are ruled out. Finally, in prior work, Montina has established interesting connections between ontological models and communication complexity problems \cite{Montina2012}. It would be interesting to determine the relationship between our results and communication complexity.  

\acknowledgements{
\textit{Acknowledgements.} We would like to thank Andrew Briggs, Matthew Leifer, Stephanie Simmons and Christopher Timpson for helpful discussions during the development of this work. EGC received support from an Australian Research Council grant DE120100559. OJEM, RL are supported by the John Templeton Foundation. This work is supported by the CHIST-ERA DIQIP project, and an FQXi Large Grant ``Time and the Structure of Quantum Theory''.
}
\bibliography{epistemic_refs}

\newpage

\appendix*
\section{APPENDIX 1}\label{sec:appendix1}
We derive the inequality in Eq.~\eqref{eq:initial_inequality_main} as follows.

\subsection{The measure space}
For an ontic state space $\Lambda$, let us define the set ${\Gamma:=\Lambda \times \mathbb{R}^+}$. A positive function $f(\lambda) \ge 0$ that is integrable on $\Lambda$ 
(i.e. $\int_\Lambda f(\lambda)\,\mathrm{d}\lambda  < \infty $) defines a region $F$ of $\Gamma$ (i.e.~the area under $f$), where:
 \begin{equation}\label{eq:function_region}
 F:=\{(\lambda,x) \,|\, 0 \le x \le f(\lambda)\}
 \end{equation}
The union and intersection of two such regions are
\begin{eqnarray*}
F \cup G&=&\{(\lambda, x) \,|\, 0 \le x \le \max(f(\lambda),g(\lambda))\} \\
F \cap G&=&\{(\lambda, x) \,|\, 0 \le x \le \min(f(\lambda),g(\lambda))\}
\end{eqnarray*}

The volume measure on the space is $\mathrm{d}\gamma=\mathrm{d}\lambda \times \mathrm{d}x$, where $\textrm{d}x$ is the Lebesgue measure.  This gives
\[
\nu(F)=\int_F \,\mathrm{d}\gamma=\int_\Lambda \,\mathrm{d}\lambda  \int_0^{f(\lambda)} \,\mathrm{d}x =\int_\Lambda f(\lambda)
\,\mathrm{d}\lambda 
\]
and
\ba\label{eq:measure}
\nu(F \cup G )&=& \int \max(f(\lambda),g(\lambda))\,\mathrm{d}\lambda  \nonumber\\
\nu(F \cap G) &=& \int \min(f(\lambda),g(\lambda))\,\mathrm{d}\lambda 
\ea

For quantum states, a region $\Phi$ is defined by the corresponding epistemic states $\mu_\phi$. Eq.~\eqref{eq:function_region} becomes:
\be\label{eq:epistemicregion}
\Phi=\{(\lambda, x) \,|\,0\le x\le\mu_\phi(\lambda)\}.
\ee

\subsection{Bonferroni Inequality}
Now, using the first Bonferroni inequality \cite{Rohatgi2011}, on any measure space with measure $\nu$ we have:
\be \label{eq:Bonferroni}
\nu\left( \bigcup_k A_k \right) \ge \sum_{k} \nu\left( A_k \right) \\
-\sum_{k<k'} \nu\left(A_k\cap A_{k'} \right)
\ee
Consider a family of states $\{\ket{e^\alpha_i}\}_{i,\alpha}$, such that $\alpha$ labels a basis, and $i$ a basis element. Using Eq.~\eqref{eq:epistemicregion} each such state defines a region $E^\alpha_i$. Let $\ket{c}$ be a fixed state, with region $C$. We shall use Eq.~\eqref{eq:Bonferroni} with $A_k=A_{\alpha,i}=C \cap E^\alpha_i$, i.e.~we replace the index $k$ with the pair of indices $(\alpha,i)$.
\begin{multline} \label{eq:Bonferroni2}
\nu\left( \bigcup_{\alpha,i} C \cap E^\alpha_i  \right) \ge \sum_{\alpha,i} \nu\left( C \cap E^\alpha_i \right) \\
- \sum_{\substack{\alpha<\beta\\i,j}} \nu\left(C \cap E^\alpha_i \cap E^\beta_j \right)  - \sum_{\substack{\alpha=\beta\\i<j}} \nu\left(C \cap E^\alpha_i \cap E^\beta_j \right) 
\end{multline}
Note that from the normalization of $\mu_c$ we have
\be
\nu\left(\bigcup_{\alpha,i} C \cap E^\alpha_i\right)\le\nu(C)=1
\ee
Eq.~\eqref{eq:Bonferroni2} then becomes:
\begin{multline}
\sum_{\alpha,i} \nu\left(C \cap E^\alpha_i\right) \le 1 + \sum_{\substack{\alpha<\beta\\ i,j}} \nu\left(C \cap E^\alpha_i \cap E^\beta_j  \right) \\
+\sum_{\substack{\alpha=\beta\\i<j}} \nu\left(C \cap E^\alpha_i \cap E^\beta_j  \right)
\end{multline}
It will be useful to substitute:
\be
\nu\left(C \cap E^\alpha_i \cap E^\beta_j  \right) \le \nu\left(E^\alpha_i \cap E^\beta_j  \right)
\ee
for the cases where $\alpha=\beta$. Using Eq.~\eqref{eq:measure}, we then have:
\begin{multline}\label{eq:initial_inequality}
\sum_{\substack{\alpha,i}}  \int_\Lambda \min \left(\mu_c(\lambda),\mu_{e^\alpha_i}(\lambda) \right)\mathrm{d}\lambda \le   \\
1 +  \sum_{\substack{\alpha<\beta\\i,j}}\int_\Lambda \min\left(\mu_c(\lambda),\mu_{e^\alpha_i}(\lambda), \mu_{e^\beta_j}(\lambda)\right)\mathrm{d}\lambda    \\
  +  \sum_{\substack{\alpha\\i<j}}\int_\Lambda \min\left(\mu_{e^\alpha_i}(\lambda),\mu_{e^\alpha_j}(\lambda)\right)\mathrm{d}\lambda
\end{multline}

\subsection{Noise}
Consider a set of $n$ quantum states $\{\psi_1, \dots,\psi_n\}$ and a measurement $M$ with outcomes $\{f_1, \dots, f_n\}$. In an ontological model, the experimentally observed frequencies $R[f_i | \psi_j]$ are obtained by using the corresponding response functions $\xi_M(f_i|\lambda)$ and epistemic states $\mu_{\psi_j}$:
\be
\int_\Lambda \xi_M(f_i|\lambda)\mu_{\psi_j}(\lambda) \,\mathrm{d}\lambda = R[f_i | \psi_j].
\ee
Now, since $\forall i, \, \min_j \left(\mu_{\psi_j}(\lambda)\right) \le  \mu_{\psi_i}(\lambda)$,
\be
\int_\Lambda \xi_M(f_i|\lambda)\min_j \left(\mu_{\psi_j}(\lambda)\right) \,\mathrm{d}\lambda \le  R[f_i | \psi_i] \,.
\ee
From the normalization constraint ${\sum_{i=1}^n \xi_M(f_i|\lambda)=1}$, we then have:
\be\label{eq:bound} 
\int_\Lambda \min_j \left(\mu_{\psi_j}(\lambda)\right) \,\mathrm{d}\lambda \le \sum_{i=1}^n R[f_i | \psi_i].
\ee

Using this, along with Eqs.~\eqref{eq:epsilonbound1}, \eqref{eq:epsilonbound2} and \eqref{eq:cond_prob_def} in Eq.~\eqref{eq:initial_inequality}, we obtain
\begin{multline}
\label{eq:initial_inequality2}
\sum_{\substack{\alpha,i}}  \int_\Lambda \min \left(\mu_c(\lambda),\mu_{e^\alpha_i}(\lambda) \right)\mathrm{d}\lambda \le \\ 
1+   3 \sum_{\substack{\alpha<\beta \\ i,j}} \epsilon(c, e^\alpha_i,e^\beta_j) +  2 \sum_{\substack{\alpha \\ i < j }} \epsilon(e^\alpha_i,e^\alpha_j) \,.
\end{multline}
Then suppose, as in Theorem \ref{maintheorem}, that the ontological model satisfies $\omega_C(\mu_\psi,\mu_\phi) \geq k \,\omega_Q(\psi,\phi)$ for all pairs of states of a system of dimension $d\geq 4$, for some constant $k$. This implies that
\begin{equation}\label{overlapboundbykd}
\int_\Lambda \min \left(\mu_c(\lambda),\mu_{e^\alpha_i}(\lambda) \right)\mathrm{d}\lambda \ge k \left( 1 - \sqrt{1 - \frac1d}\right).
\end{equation}
If $d$ is power prime, then $d+1$ mutually unbiased bases can be found, as in the proof of Theorem~\ref{maintheorem}. In this case, substituting Eq.~(\ref{overlapboundbykd}) in Eq.~\eqref{eq:initial_inequality2} yields Eq.~\eqref{eq:initial_inequality_main} as required.
\section{APPENDIX 2}\label{sec:appendix2}

In the case $d=3$, there exist four mutually unbiased bases. However, if three vectors are chosen, each from a different basis in the mutually unbiased set, then Eq.~(\ref{eq:PP_incompatible}) of the main text shows that the resulting triple is not PP-incompatible. This means that the argument employed in the proof of Theorem~\ref{maintheorem} does not go through if $d=3$. It is still possible, however, to rule out maximally $\psi$-epistemic models, by adapting the results derived in Appendix~1 for the noisy case to the case in which triples of vectors are used that are approximately but not quite PP-incompatible. 

Consider the following three mutually unbiased bases in $d=3$:
\[
\ket{e^1_1}=
\begin{pmatrix}
1 \\
0 \\
0
\end{pmatrix}
,
\ket{e^1_2}=
\begin{pmatrix}
0 \\
1\\
0
\end{pmatrix}
,
\ket{e^1_3}=
\begin{pmatrix}
0 \\
0\\
1
\end{pmatrix},
\]
\[
\ket{e^2_1}=
\frac{1}{\sqrt{3}}
\begin{pmatrix}
1 \\
1 \\
\omega^2
\end{pmatrix}
,
\ket{e^2_2}=
\frac{1}{\sqrt{3}}
\begin{pmatrix}
1 \\
\omega^2\\
1
\end{pmatrix}
,
\ket{e^2_3}=
\frac{1}{\sqrt{3}}
\begin{pmatrix}
1 \\
\omega\\
\omega
\end{pmatrix},
\]
and
\[
\ket{e^3_1}=
\frac{1}{\sqrt{3}}
\begin{pmatrix}
1 \\
\omega\\
\omega^2
\end{pmatrix}
,
\ket{e^3_2}=
\frac{1}{\sqrt{3}}
\begin{pmatrix}
1 \\
1\\
1
\end{pmatrix}
,
\ket{e^3_3}=
\frac{1}{\sqrt{3}}
\begin{pmatrix}
1 \\
\omega^2\\
\omega
\end{pmatrix},
\]
where $\omega=e^{2\pi i/3}$.
Define the fixed normalised state:
\[
\ket{c}=
\begin{pmatrix}
 -0.374-0.236 i \\
 0.778-0.071 i \\
 0.018-0.441 i \\
\end{pmatrix}.
\]

For a triple $\ket{c},\ket{e^\alpha_i},\ket{e^\beta_j}$, and a measurement basis $\{f_1,f_2,f_3\}$, let
\[
\epsilon(c,e^\alpha_i,e^\beta_j) = \frac13 \left( P(f_1|e^\alpha_i) + P(f_2|e^\beta_j) + P(f_3|c) \right).
\]
This is similar to Eq.~(\ref{eq:epsilonbound1}) in the main text, except that the quantities on the right hand side are the probabilities predicted by quantum theory for a $d=3$ system, rather than observed frequencies in a noisy experiment. Suppose that an ontological model recovers the quantum predictions for the $d=3$ system. Then the derivation of the following inequality goes through exactly as that of Eq.~(\ref{eq:initial_inequality2}) in Appendix~1, except that experimental frequencies $R[\cdot | \cdot]$ are replaced with quantum probabilities $P( \cdot | \cdot )$, and the quantity $\epsilon(e^\alpha_i,e^\alpha_j)$ corresponds to $P(e^\alpha_i | e^\alpha_j)$, which is $0$ for $i\ne j$:
\begin{multline}
\label{eq:initial_inequality3}
\sum_{\substack{\alpha,i}}  \int_\Lambda \min \left(\mu_c(\lambda),\mu_{e^\alpha_i}(\lambda) \right)\mathrm{d}\lambda \le \\ 
1+   3 \sum_{\substack{\alpha<\beta \\ i,j}} \epsilon(c, e^\alpha_i,e^\beta_j) \,.
\end{multline}

If the triple $\ket{c},\ket{e^\alpha_i},\ket{e^\beta_j}$ is not PP-incompatible, then the quantity $\epsilon(c,e^\alpha_i,e^\beta_j)$ is non-zero for any choice of basis $\{f_1,f_2,f_3\}$. For each triple $\ket{c},\ket{e^\alpha_i},\ket{e^\beta_j}$, we found numerically the basis $\{f_1,f_2,f_3\}$ such that $\epsilon(c,e^\alpha_i,e^\beta_j)$ is minimized. These are shown in Tables \ref{tbl:first}, \ref{tbl:second} and \ref{tbl:third}, in which we display the optimal bases for $\alpha=1,\beta=2$, $\alpha=1,\beta=3$, and $\alpha=2,\beta=3$ respectively.

Reading from the last column of Table \ref{tbl:first} we have:
\[
3 \sum_{i,j}\epsilon(c,e^1_i,e^2_j)=0.2257\,.
\]
By including the data from Tables \ref{tbl:second} and \ref{tbl:third} we then have
\begin{multline}
3 \left[ \sum_{i,j}\epsilon(c,e^1_i,e^2_j)+ \sum_{i,j}\epsilon(c,e^1_i,e^3_j)+ \sum_{i,j}\epsilon(c,e^2_i,e^3_j) \right]= \\
0.649\,.
\end{multline}
For the states used we obtain $ \sum_{\alpha,i}\left(1-\sqrt{1-\born{e^\alpha_i}{c}}\right)=1.739$. Assuming that an ontological model satisfies $\omega_C(\mu_\psi,\mu_\phi) \geq k \,\omega_Q(\psi,\phi)$ for all pairs of states, and reproduces perfectly the quantum predictions, Eq.~\eqref{eq:initial_inequality3} gives $k\leq0.95<1$.

\newpage
\begin{widetext}

\begin{table}
\centering
\caption{Minimizing bases $\{f_i\}_i$, for the $\alpha=1,\beta=2$ family of states}\label{tbl:first}
\footnotesize
\[
\begin{array}{c||c|c|c||c}
(i,j) & f_1 & f_2 & f_3 & \epsilon(c,e^\alpha_i,e^\beta_j) \\
\hline
&  0 & -0.3439-0.6178 i & -0.6621-0.2482 i \\
 (1,1)  & 0.8171-0.5765 i & 0 & 0 & 0 \\
 & 0 & 0.3631-0.6068 i & 0.1161+0.6975 i  \\
 \hline
  & 0 & 0.5317-0.5285 i & 0.5013-0.4320 i \\
(1,2) & -0.3945-0.6556 i & 0.2415+0.3511 i & -0.2446-0.4161 i & 0 \\
 & -0.1740+0.6199 i & -0.1069+0.4949 i & 0.1607-0.5507 i  \\
 \hline
&  0.05556+0.08103 i & 0.2653+0.5440 i & 0.0222+0.7897 i \\
 (1,3) & -0.8215-0.1098 i & -0.0186-0.4206 i & -0.0416+0.3662 i & 0.02280 \\
 & -0.4846+0.2620 i & 0.4961+0.4586 i & -0.1328-0.4716 i \\
\hline
 &0.2686+0.1752 i & -0.5579+0.0032 i & -0.7647-0.0321 i \\
(2,1) &  -0.01766+0.06827 i & 0.7702-0.2381 i & -0.5608+0.1751 i & 0.02046 \\
& 0.9430+0.0539 i & 0.1924-0.0423 i & 0.2096-0.1583 i \\
\hline
& 0.8190+0.5614 i & 0.06326-0.05514 i & 0.05078-0.06618 i \\
(2,2) & -0.05017-0.05876 i & 0.7839+0.1828 i & -0.0791-0.5829 i & 0.02854 \\
 & -0.03910-0.08064 i & 0.2974-0.5065 i & 0.7911+0.1453 i \\
\hline
&  -0.2272-0.8467 i & 0.2412+0.1173 i & 0.1067-0.3848 i \\
 (2,3) & 0.2018+0.0689 i & 0.7575+0.3690 i & -0.4565+0.1902 i & 0.1119\\
& 0.4148-0.1180 i & -0.1806-0.4306 i & -0.6536-0.4109 i \\
\hline
& -0.1365-0.2454 i & 0.3174-0.7208 i & -0.5095-0.2032 i \\
 (3,1) & -0.6944-0.6626 i & -0.0261+0.2290 i & 0.1600+0.0126 i & 0 \\
& 0 & 0.5715+0.0063 i & 0.0198+0.8203 i \\
\hline
& 0.7135+0.6071 i & 0.1176+0.2737 i & -0.0914-0.1587 i \\
(3,2) &  -0.0483-0.3463 i & -0.2521+0.7572 i & 0.0999-0.4804 i & 0\\
& 0 & 0.4121+0.3232 i & 0.7268+0.4443 i \\
\hline
&  0.2274-0.7062 i & 0.5788+0.0661 i & 0.3110+0.1162 i \\
(3,3) &  0.0076+0.6570 i & 0.5561-0.0538 i & 0.4974-0.0934 i & 0.04198\\
& 0.11815+0.06333 i & -0.1493-0.5711 i & 0.1659+0.7785 i 
\end{array}
\]
\end{table}

\begin{table}
\caption{Optimal bases for $\alpha=1,\beta=3$}\label{tbl:second}
\footnotesize
\[
\begin{array}{c||c|c|c||c}
(i,j) & f_1 & f_2 & f_3 & \epsilon(c,e^\alpha_i,e^\beta_j) \\
\hline
 & 0 & 0.6374-0.3061 i & -0.7064-0.0324 i \\
(1,1) &  -0.7439+0.6683 i & 0 & 0 & 0 \\
& 0 & 0.5838+0.3990 i & 0.3251+0.6279 i \\
 \hline
 &-0.005857-0.001555 i & 0.3537-0.5332 i & 0.0495-0.7668 i \\
(1,2) &  0.1079-0.4250 i & -0.0440+0.6884 i & -0.2572-0.5155 i & 0.0001107 \\
 &-0.6994-0.5644 i & -0.2999-0.1573 i & 0.2785-0.0045 i \\
 \hline
 &0.08300+0.06424 i & 0.5014-0.3472 i & 0.1033-0.7787 i \\
 (1,3) & -0.5131-0.2961 i & 0.0173+0.6084 i & -0.3589-0.3869 i & 0.02699\\
 &-0.5763+0.5531 i & -0.4724-0.1854 i & 0.2506-0.2041 i \\
\hline
 &0.5057-0.8375 i & 0.10296+0.07596 i & -0.1545-0.0510 i \\
(2,1) & 0.02134+0.06721 i & 0.1922-0.7829 i & -0.4138-0.4170 i & 0.02046 \\
 &0.0046-0.1945 i & -0.4809-0.3202 i & 0.7088-0.3550 i \\
\hline
& 0.01974-0.01750 i & -0.1549-0.5597 i & -0.4182-0.6980 i \\
(2,2) &  0.09781+0.04938 i & 0.4684+0.6598 i & -0.4498-0.3619 i & 0.04659 \\
 &0.9804-0.1617 i & -0.08905-0.01351 i & 0.06312+0.02456 i \\
\hline
 &0.3240-0.1034 i & -0.3572+0.3563 i & -0.7926-0.0401 i \\
(2,3) &  -0.1937-0.0011 i & -0.5076-0.6591 i & -0.1714+0.4909 i & 0.09913 \\
 &0.2916+0.8728 i & 0.2304-0.0156 i & -0.1203+0.2923 i \\
\hline
 &0.5078+0.6160 i & 0.1906+0.4559 i & -0.3061-0.1575 i \\
(3,1) &  -0.2341+0.5549 i & 0.4183-0.5042 i & 0.0031+0.4563 i & 0 \\
& 0 & -0.5270-0.2212 i & -0.7834+0.2440 i \\
\hline
 &0.0329-0.6609 i & -0.2892+0.4989 i & -0.1802+0.4439 i \\
(3,2) & 0.7188-0.2129 i & 0.4983+0.0983 i & 0.4239+0.0159 i & 0.00006005 \\
 &0.002024-0.003977 i & -0.2098-0.6045 i & 0.3559+0.6811 i \\
\hline
 &0.1364+0.5608 i & 0.5771+0.3010 i & -0.4290-0.2435 i \\
(3,3) & 0.0167-0.8130 i & 0.3595+0.2569 i & -0.2174-0.3103 i & 0.01415 \\
 &-0.07506+0.01027 i & 0.6086+0.1037 i & 0.7462+0.2374 i \\
\end{array}
\]
\end{table}

\begin{table}
\caption{Optimal bases for $\alpha=2,\beta=3$}\label{tbl:third}
\footnotesize
\[
\begin{array}{c||c|c|c||c}
(i,j) & f_1 & f_2 & f_3 & \epsilon(c,e^\alpha_i,e^\beta_j) \\
\hline
& 0.5526-0.5319 i & 0.3382+0.1860 i & -0.4930-0.1402 i \\
 (1,1) & -0.5183+0.2811 i & 0.6172+0.3289 i & -0.2454-0.3212 i &0 \\
& 0.2344-0.0957 i & 0.6014+0.0158 i & 0.7443+0.1414 i \\
 \hline
 &-0.2839+0.4207 i & -0.11852-0.00351 i & -0.1025+0.8472 i \\
 (1,2) & -0.3820-0.3959 i & 0.7436+0.1541 i & 0.2789+0.2070 i &0.0001284 \\
& -0.3617-0.5557 i & -0.6194-0.1599 i & 0.3721+0.1124 i \\
 \hline
& 0.6414+0.2288 i & 0.2473+0.4718 i & -0.4496-0.2245 i \\
 (1,3) & -0.6346-0.3339 i & 0.2927+0.5058 i & -0.2340-0.2992 i & 0.02836 \\
& 0.0596+0.1361 i & 0.3784+0.4812 i & 0.6818+0.3720 i \\
\hline
& 0.6128-0.4277 i & -0.4496-0.2071 i & 0.4062+0.1773 i \\
 (2,1) & 0.6001+0.2710 i & 0.1370+0.5915 i & 0.0270-0.4439 i & 0\\
&  -0.07809+0.04348 i & 0.5353-0.3158 i & 0.7219-0.2907 i \\
\hline
& 0.6699+0.1325 i & -0.2021-0.5916 i & 0.3611+0.1114 i \\
 (2,2) & 0.2754+0.6738 i & -0.3359+0.4221 i & -0.1302-0.4027 i & 0\\
 & 0.05128-0.03404 i & 0.5380+0.1695 i & 0.6515-0.5036 i \\
\hline
& 0.2226-0.0614 i & 0.5312-0.4491 i & 0.1305-0.6677 i \\
 (2,3) & -0.6241+0.1562 i & 0.1084+0.4556 i & -0.3309-0.5073 i & 0.01016 \\
& -0.2877+0.6708 i & -0.3835-0.3871 i & 0.3932-0.1256 i \\
\hline
& 0.5453-0.1345 i & 0.2223-0.1133 i & -0.7708-0.1677 i \\
 (3,1) & -0.4957-0.4368 i & 0.3686+0.5809 i & -0.2060-0.2186 i & 0.04370\\
 & 0.4810-0.1291 i & 0.6644+0.1519 i & 0.4802+0.2385 i \\
\hline
& 0.3333+0.4625 i & 0.1472-0.7084 i & -0.0102-0.3891 i \\
 (3,2) & 0.1160+0.4700 i & -0.1432+0.6284 i & -0.2557-0.5337 i & 0.02959\\
&  0.5308-0.3986 i & 0.1481+0.1977 i & 0.6191-0.3393 i \\
\hline
& 0.3768-0.2046 i & -0.0433-0.5177 i & 0.2192+0.7058 i \\
 (3,3) & -0.6513-0.2318 i & 0.5348-0.4735 i & 0.08950-0.06175 i & 0.1035 \\
& 0.5662+0.1325 i & 0.3389-0.3242 i & 0.3229-0.5811 i \\
\end{array}
\]
\end{table}

\end{widetext}

\end{document}